\newtheorem{thm}{Theorem}[section]
\newtheorem{proposition}[thm]{Proposition}
\newtheorem{corollary}[thm]{Corollary}
\newtheorem{theorem}[thm]{Theorem}
\theoremstyle{definition}
\newtheorem{example}[thm]{Example}
\DeclareMathOperator{\End}{End}
\DeclareMathOperator{\id}{id}
\newcommand{\qqquad}{\qquad\quad}
\title{On the solutions of the local Zamolodchikov  tetrahedron equation}
\date{}
\author{M. Chirkov\thanks{mikhlchirkov@gmail.com} ~and S. Konstantinou-Rizos\thanks{skonstantin84@gmail.com}}
\affil{Centre of Integrable Systems, P.G. Demidov Yaroslavl State University, Yaroslavl, Russia}
\patchcmd{\subequations}{\alph{equation}}{\alphalph{\value{equation}}}{}{}
\begin{document}

\maketitle

\begin{abstract}
We study the solutions of the local Zamolodhcikov tetrahedron equation in the form of correspondences derived by $3\times 3$ matrices. We present all the associated generators of $4$-simplex maps satisfying the local tetrahedron equation. Moreover, we demonstrate that, from some of our solutions, we can recover the  4-simplex extensions of Kashaev--Korepanov--Sergeev and Hirota type tetrahedron maps. Finally, we construct several novel 4-simplex maps.
\end{abstract}

\bigskip


\begin{quotation}
\noindent{\bf PACS numbers:}
02.30.Ik, 02.90.+p, 03.65.Fd.
\end{quotation}

\begin{quotation}
\noindent{\bf Mathematics Subject Classification 2020:}
35Q55, 16T25.
\end{quotation}

\begin{quotation}
\noindent{\bf Keywords:} 4-simplex maps, tetrahedron maps, local Zamolodchikov tetrahedron equation, Yang--Baxter maps.
\end{quotation}

\section{Introduction}\label{intro}
The $n$-simplex equations are a family of important  equations of Mathematical Physics. The first member of the family is trivial, while the second, third and fourth members are quite famous  and  have attracted the  interest of many leading scientists from various fields of Mathematics and Physics. 

In particular,  the  2-simplex  equation is the Yang--Baxter equation \cite{Drinfeld} which is one of the most  fundamental equations of  Mathematical Physics, the 3-simplex equation is the  well-celebrated Zamolodchikov's tetrahedron equation \cite{Zamolodchikov, Zamolodchikov-2}, and the 4-simplex equation is  the Bazhanov--Stroganov equation \cite{Bazhanov}.

Due to the importance of $n$-simplex maps, there is a need to develop methods for constructing and classifying them. Famous results towards the classification of $n$-simplex maps, include the classification of Yang--Baxter maps by Adler, Bobenko and Suris \cite{ABS-2004}, as well as the classification of Zamolodchikov tetrahedron maps by Kashaev, Korepanov and Sergeev \cite{Kashaev-Sergeev, Sergeev}.  There are several methods in the literature for constructing $n$-simplex maps; we indicatively refer to the derivation of $n$-simplex maps via the symmetries of lattice equations \cite{Pap-Tongas-Veselov, Kassotakis-Tetrahedron}, their construction via matrix refactorisation problems and their construction via Darboux transformations for integrable PDEs \cite{Sokor-Sasha, Sokor-Papamikos, Papamikos}.

The most significant property that share all the $n$-simplex equations is that the local $(n-1)$-simplex equation is a generator of solutions to the $n$-simplex equation \cite{Dimakis-Hoissen}. It is known that any unique solution of the local $(n-1)$-simplex equation satisfies the $n$-simplex equation. However, there is no proof of such a statement; thus, in order to prove that a map satisfying uniquely the local $(n-1)$-simplex is an $n$-simplex map, we either need to substitute it to  the  $n$-simplex equation, or use  supplementary matrix conditions \cite{Sokor-2022, Kouloukas}.

Recently, it has become understood that non-unique solutions  (in the form of correspondences) of the local $(n-1)$-simplex equation also give rise to $n$-simplex maps \cite{Doliwa-Kashaev, Igonin-Sokor, Sokor-2020, Sokor-2023, Sokor-Kouloukas}. Moreover, this correspondence approach gives rise to novel $n$-simplex maps that do not belong to  any of the known  classification lists. Most known classification results focus on  $n$-simplex maps themselves. It makes sense to classify the solutions of the local $(n-1)$-simplex equation which generate $n$-simplex maps which  may  lead  to  wider classification results.

In this paper, we study the solutions of the local Zamolodchikov tetrahedron equation. Specifically, we:
\begin{itemize}
    \item Present all the possible solutions to the local tetrahedron equation in the form of correspondences generated by $3\times 3$ matrices depending on free variables;
    \item We demonstrate that these solutions generate all known in  the literature 4-simplex maps. As illustrative examples we use Kashaev--Sergeev--Korepanov and Hirota type 4-simplex maps \cite{Sokor-2023, Sokor-2024};
    \item We use the obtained local tetrahedron correspondences to construct new, noninvolutive, 4-simplex maps.
\end{itemize}

The rest of the text is organised as follows: In the next section, we  provide all the necessary preliminary information which makes the text self-contained. In particular, we fix the notation that is used throughout the text, we explain what 4-simplex maps are, and what is their relation to the local tetrahedron equation. Section \ref{sol-local-tetra} deals with the study of all possible solutions of the local Zamolodchikov tetrahedron equation generated by $3\times 3$ matrices. We  show that the essentially different solutions to the local tetrahedron equation in the form of correspondences are generated by four simple $3\times 3$ matrices. These correspondences are generators of 4-simplex maps. Finally, in section \ref{4simplexMaps}, we construct novel, noninvolutive, 4-simplex maps.

\section{Preliminaries}
\subsection{Notation}
Throughout the text:
\begin{itemize}
    \item By $\mathcal{X}$ we denote an arbitrary set, whereas by Latin italic letters (i.e. $x, y, u, v$ etc.) the elements of $\mathcal{X}$. 

    \item  By $\mathcal{X}^n$ we denote the Cartesian product: $\mathcal{X}^n=\underbrace{\mathcal{X}\times \ldots \times \mathcal{X}}_\text{$n$}$.
  
    \item Matrices will be denoted by capital Latin straight letters (i.e. ${\rm A}, {\rm B}, {\rm C}$) etc. Additionally, by ${\rm A}^n_{i_1 i_2\ldots i_m}$ we denote the $n\times n$ extensions of the $m\times m$ ($m<n$) matrix ${\rm A}$, where the elements of ${\rm A}$ are placed at the intersection of the $i_1, i_2\ldots i_m$ rows with the $i_1, i_2\ldots i_m$ columns of matrix ${\rm A}^n_{i_1 i_2\ldots i_m}$. For example, ${\rm A}^3_{12}=\begin{pmatrix} 
 a_{11} &  a_{12} & 0\\ 
a_{21} &  a_{22} & 0\\
0 & 0 & 1
\end{pmatrix}$, ${\rm A}^3_{13}=\begin{pmatrix} 
 a_{11} & 0 &  a_{12} \\
 0 & 1 & 0\\
a_{21} & 0 & a_{22} 
\end{pmatrix}$ and  ${\rm A}^3_{23}=\begin{pmatrix} 
1 & 0 & 0\\
0 & a_{11} &  a_{12} \\ 
0 & a_{21} &  a_{22} \\
\end{pmatrix}$
are $3\times 3$ extensions of matrix ${\rm A}=\begin{pmatrix} 
 a_{11} &  a_{12} \\ 
a_{21} &  a_{22} \\
\end{pmatrix}$.
\end{itemize}

\subsection{Local Zamolodchikov tetrahedron  equation  and 4-simplex maps}
As it  was mentioned earlier, the local $3$-simplex equation is a generator of solutions of the $4$-simplex equation. In this section, we  give the definitions of the local Zamolodchikov tetrahedron equation and a $4$-simplex map.

Let $\mathcal{X}$ be a set. A map $S:\mathcal{X}^4\rightarrow \mathcal{X}^4$, namely
\begin{equation}\label{4-simplex-map}
 S:(x,y,z,t)\mapsto (u(x,y,z,t),v(x,y,z,t),w(x,y,z,t),r(x,y,z,t)),
\end{equation}
is called a \textit{4-simplex map} if it satisfies the \textit{set-theoretical 4-simplex}  (or Bazhanov--Stroganov) equation
\begin{equation}\label{4-simplex-eq}
    S^{1234}\circ S^{1567} \circ S^{2589}\circ S^{368,10} \circ S^{479,10}=S^{479,10}\circ S^{368,10}\circ S^{2589}\circ S^{1567} \circ S^{1234}.
\end{equation}
Functions $S^{ijkl}\in\End(\mathcal{X}^{10})$, $i,j,k,l=1,\ldots 10,~i< j<k<l$, in \eqref{4-simplex-eq} are maps that act as map $S$ on the $ijkl$ terms of the Cartesian product $\mathcal{X}^{10}$ and trivially on the others. For instance,
\begin{align*}
S^{1567}&(x_1,x_2,\ldots x_{10})=\\
&(u(x_1,x_5,x_6,x_7),x_2,x_3,x_4,v(x_1,x_5,x_6,x_7),w(x_1,x_5,x_6,x_7),r(x_1,x_5,x_6,x_7),x_8,x_9,x_{10}).
\end{align*}

Now, let ${\rm L}={\rm L}(x)$ be a $3\times 3$ square matrix depending on a variable $x\in\mathcal{X}$ and a parameter $\kappa\in\mathbb{C}$ of the form
\begin{equation}\label{matrix-L-simplex}
   {\rm L}(x)= \begin{pmatrix} 
a(x) & b(x) & c(x)\\ 
d(x) & e(x) & f(x)\\
k(x) & l(x) & m(x)
\end{pmatrix},
\end{equation}
where its entries $a, b, c, d, e, f, k, l$ and $m$ are scalar functions of $x$ and $k$. Let ${\rm L}^6_{ijk}(x;\kappa)$, $i,j,k=1,\ldots 6$, $i< j<k$, be the $6\times 6$ extensions of matrix \eqref{matrix-L-simplex},  defined by \allowdisplaybreaks
{\small
\begin{subequations}\label{6x6-extensions}
\begin{align}
    & {\rm L}^6_{123}(x)= \begin{pmatrix} 
a(x) & b(x) & c(x) & 0 & 0 & 0\\ 
d(x) & e(x) & f(x) & 0 & 0 & 0\\
k(x) & l(x) & m(x) & 0 & 0 & 0\\
0 & 0 & 0 & 1 & 0 & 0 \\
0 & 0 & 0 & 0 & 1 & 0 \\
0 & 0 & 0 & 0 & 0 & 1 \\
\end{pmatrix}, \quad
{\rm L}^6_{145}(x)= \begin{pmatrix} 
a(x) & 0 & 0 & b(x) & c(x) & 0\\ 
0 & 1 & 0 & 0 & 0 & 0 \\
0 & 0 & 1 & 0 & 0 & 0 \\
d(x) & 0 & 0 & e(x) & f(x) & 0\\
k(x) & 0 & 0 & l(x) & m(x) & 0\\
0 & 0 & 0 & 0 & 0 & 1 \\
\end{pmatrix},\\
& {\rm L}^6_{246}(x)= \begin{pmatrix} 
1 & 0 & 0 & 0 & 0 & 0 \\
0 & a(x) & 0 & b(x) & 0 & c(x)\\ 
0 & 0 & 1 & 0 & 0 & 0 \\
0 & d(x) & 0 & e(x) & 0 & f(x)\\
0 & 0 & 0 & 0 & 1 & 0 \\
0 & k(x) & 0 & l(x) & 0 & m(x)
\end{pmatrix},\quad 
 {\rm L}^6_{356}(x)= \begin{pmatrix} 
 1 & 0 & 0 & 0 & 0 & 0 \\
 0 & 1 & 0 & 0 & 0 & 0 \\
0 & 0 & a(x) & 0 & b(x) & c(x)\\ 
0 & 0 & 0 & 1 & 0 & 0 \\
0 & 0 & d(x) & 0 & e(x) & f(x)\\
0 & 0 & k(x) & 0 & l(x) & m(x)
\end{pmatrix}.
\end{align}
\end{subequations}
}

We call the following matrix four-factorisation problem
\begin{equation}\label{local-tetrahedron}
    {\rm L}^6_{123}(u){\rm L}^6_{145}(v){\rm L}^6_{246}(w){\rm L}^6_{356}(r)={\rm L}^6_{356}(t){\rm L}^6_{246}(z){\rm L}^6_{145}(y){\rm L}^6_{123}(x)
\end{equation}
\textit{local $3$-simplex equation} or \textit{local Zamolodchikov tetrahedron equation}. The local tetrahedron equation is a generator of potential solutions to the 4-simplex equation. If map \eqref{4-simplex-map} satisfies equation \eqref{local-tetrahedron}, then the matrix refactorisation problem \eqref{local-tetrahedron} is called a \textit{Lax representation} for map \eqref{4-simplex-map}.

Examples of $4$-simplex maps can  be found in \cite{Sokor-2023, Hietarinta, Bardakov}.

\section{Solutions of the local tetrahedron  equation}\label{sol-local-tetra}
In this section, we aim to study the solutions  to the local tetrahedron equation in the form of correspondences generated by $3\times  3$ matrices. Specifically, let $x_i\in \mathbb{C}$, $i=1,\ldots, 9$, be free variables  and
\begin{equation}\label{matrix-3x3}
{\rm L}(x_1,\ldots, x_9)=\begin{pmatrix}
    x_1 & x_2 & x_3\\
    x_4 & x_5 & x_6\\
    x_7 & x_8 & x_9
\end{pmatrix}.
\end{equation}
Then, substitute \eqref{matrix-3x3} to the local tetrahedron  equation:
\begin{align}\label{3x3-Tetr}
    & {\rm L}^6_{123}(u_1,\ldots, u_9){\rm L}^6_{145}(v_1,\ldots, v_9){\rm L}^6_{246}(w_1,\ldots, w_9){\rm L}^6_{356}(r_1,\ldots, r_9)=\nonumber\\&{\rm L}^6_{356}(t_1,\ldots, t_9){\rm L}^6_{246}(z_1,\ldots, z_9){\rm L}^6_{145}(y_1,\ldots, y_9){\rm L}^6_{123}(x_1,\ldots, x_9).
\end{align}

We seek matrices \eqref{matrix-3x3} for which the matrix refactorisation problem \eqref{3x3-Tetr} has solution for $u_i$, $v_i$, $w_i$  and  $r_i$, $i=1,\ldots, 9$. In particular  we have  the following.

\begin{theorem}
    Let $x_i\in \mathbb{C}$, $i=1,\ldots, 9$, be free variables. Then, the essentially different solutions of the local Zamolodchikov tetrahedron equation \eqref{local-tetrahedron} are given  in the form of correspondences generated by matrices of the form:
    \begin{equation}\label{matrix-forms}
      \rm{A}=  \begin{pmatrix}
    x_1 & x_2 & 0\\
    x_3 & x_4 & 0\\
    0 & 0 & x_5
\end{pmatrix},\quad
 \rm{B}=  \begin{pmatrix}
    x_1 & x_2 & 0\\
    x_3 & x_4 & x_5\\
    0 & 0 & 0
\end{pmatrix},\quad
 \rm{C}= \begin{pmatrix}
    0 & x_1 & 0\\
    x_2 & 0 & x_3\\
    0 & x_4 & 0
\end{pmatrix},\quad
 \rm{D}= \begin{pmatrix}
    0 & 0 & 0\\
    x_1 & 0 & 0\\
    0 & x_2 & x_3
\end{pmatrix},
    \end{equation}
where ${\rm  A}=  {\rm  A} (x_1,x_2,x_3,x_4,x_5)$, ${\rm  B}=  {\rm  B} (x_1,x_2,x_3,x_4,x_5)$, $ {\rm  C}=  {\rm  C}(x_1,x_2,x_3,x_4)$ and $ {\rm  D}=  {\rm  D} (x_1,x_2,x_3)$.
    All the rest of solutions are generated by matrices \eqref{matrix-forms}, after:
    \begin{enumerate}
        \item Reflection of $x_i$'s with respect to the antidiagonal;
        \item Rearranging $x_i$'s among themselves;
        \item Setting any variable $x_i=0$.
    \end{enumerate}
\end{theorem}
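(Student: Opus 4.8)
The plan is to treat \eqref{3x3-Tetr} as a polynomial identity and to reduce the question to a classification of admissible sparsity patterns for the matrix \eqref{matrix-3x3}. First I would expand both sides of \eqref{3x3-Tetr} as explicit $6\times 6$ matrices. Since each $\mathrm{L}^6_{ijk}$ is supported on the rows and columns indexed by $\{i,j,k\}$ and carries ones on the complementary diagonal, the two fourfold products can be computed directly, and equating the $36$ entries yields a system of polynomial equations relating the unknown outputs $u_i,v_i,w_i,r_i$ to the free inputs $x_i,y_i,z_i,t_i$. The relevant combinatorics is that the four index triples $\{1,2,3\}$, $\{1,4,5\}$, $\{2,4,6\}$, $\{3,5,6\}$ are the four triangular faces of a tetrahedron whose edges are labelled $1,\dots,6$, so that consecutive factors share exactly one index; this sharing is precisely what couples the factors and produces nontrivial compatibility conditions rather than a trivial entrywise match.

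Next I would split the $36$ equations into a subset sufficient to solve (generically) for the outputs $u,v,w,r$ in terms of the inputs, and the remaining equations, which upon back-substitution become relations among the inputs $x,y,z,t$ alone. Because the $x_i,y_i,z_i,t_i$ are independent free variables, these residual relations can hold identically only if their defining coefficient polynomials, which are built solely from the entries of $\mathrm{L}$, vanish. This is the heart of the matter: for a fully populated matrix \eqref{matrix-3x3} the residual relations are genuinely obstructing, and requiring the obstruction polynomials to vanish forces a definite collection of entries of $\mathrm{L}$ to be zero. Tracking which entries are forced to vanish, and in which combinations, is exactly the case analysis that produces the admissible patterns.

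To keep the case analysis finite and to identify the \emph{essentially different} solutions, I would exploit the two symmetries built into the statement. Transposing the whole identity \eqref{3x3-Tetr} reverses the order of the factors and replaces each $\mathrm{L}^6_{ijk}$ by $(\mathrm{L}^T)^6_{ijk}$, since the support set $\{i,j,k\}$ is invariant under transposition; conjugating by a suitable exchange matrix then restores the original order of the factors and realises operation (1), reflection across the antidiagonal. Operation (2), relabelling the free variables, is a symmetry because the $x_i$ enter as independent unknowns, and operation (3), setting an $x_i$ to zero, only strengthens the constraints and so cannot destroy solvability. Modulo these three operations I would argue that every admissible zero pattern is equivalent to one of $\mathrm{A}$, $\mathrm{B}$, $\mathrm{C}$, $\mathrm{D}$.

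Finally, for each of the four canonical forms I would carry out the refactorisation explicitly, solving the output equations and verifying that the obstruction polynomials vanish identically, thereby exhibiting the correspondence and confirming that the pattern is genuinely admissible. The main obstacle I anticipate is twofold. First, one must ensure that the case analysis in the second step is exhaustive, since a priori there are many zero patterns and every inadmissible one must be ruled out without omission; here the symmetry reduction is essential to cut the number of cases to a manageable size. Second, one must pass from \textbf{the obstruction polynomials vanish} to \textbf{a solution actually exists}, which for the non-unique (correspondence) solutions requires care in choosing the output branches so that all $36$ entrywise equations are simultaneously satisfied.
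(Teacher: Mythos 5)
Your proposal follows the same computational route as the paper: both reduce \eqref{3x3-Tetr} to the system of entrywise polynomial equations and decide, sparsity pattern by sparsity pattern, which placements of zeros in \eqref{matrix-3x3} leave that system solvable for the $u_i,v_i,w_i,r_i$, using the antidiagonal reflection and relabelling to cut down the cases. (Your symmetry argument is sound: transposing the whole identity reverses the factors and one can then conjugate by the permutation $(16)(25)$ of the six indices, which reverses each of the triples $123$, $145$, $246$, $356$ and brings the factors back to standard order, realising $\mathrm{L}\mapsto\mathrm{P}\mathrm{L}^T\mathrm{P}$ together with the swap $(u,v,x,y)\leftrightarrow(r,w,t,z)$ that the paper records, up to a typo, after Figure 1.) The difference is one of direction. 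You propose to work top-down from the fully populated matrix, isolating residual relations among the free inputs and forcing the obstruction coefficients to vanish; the paper works bottom-up, exhibiting the correspondences \eqref{corr-A}--\eqref{corr-D} for the four candidate forms and then asserting that every sub-pattern remains solvable and that promoting any zero of $\mathrm{A},\mathrm{B},\mathrm{C},\mathrm{D}$ to a free variable destroys solvability. Your organisation is more explicit about the exhaustiveness of the case analysis, which is the actual crux of the theorem and which the paper leaves as ``it can be readily verified''; neither you nor the paper writes that enumeration out. One step of your plan is logically unsound as stated, even though its conclusion holds here: setting $x_i=0$ does not merely ``strengthen the constraints'', it also specialises the solution formulas of the correspondence, which could in principle acquire vanishing denominators, so each such degeneration must be checked directly (as the paper claims to do) rather than deduced from monotonicity of constraints.
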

\begin{proof}
    We substitute matrix ${\rm  A}$ to the local tetrahedron equation: 
    \begin{align*}
        &{\rm A}^6_{123}(u_1,\ldots u_5){\rm A}^6_{145}(v_1,\ldots v_5){\rm A}^6_{246}(w_1,\ldots w_5){\rm A}^6_{356}(r_1,\ldots r_5)=\\
        &{\rm A}^6_{356}(t_1,\ldots t_5){\rm A}^6_{246}(z_1,\ldots z_5){\rm A}^6_{145}(y_1,\ldots y_5){\rm A}^6_{123}(x_1,\ldots x_5).
    \end{align*}
    The above matrix tetra-factorisation problem is equivalent to the system of polynomial equations
   \begin{eqnarray}
       &u_1v_1=x_1y_1,\quad u_2w_1+u_1v_2w_3=x_2y_1\quad u_2w_2+u_1v_2w_4=y_2, \quad u_3v_1=x_3z_1+x_1y_3z_2,&\nonumber\\
       &u_4 w_1 + u_3 v_2 w_3 = x_4 z_1 + x_2 y_3 z_2,\quad u_4 w_2 + u_3 v_2 w_4 = y_4 z_2,\quad r_1 u_5 = t_1,x_5,\quad r_2 u_5 = t_2 y_5,&\label{corr-A}\\
       &v_3 = x_3 z_3 +x_1 y_3 z_4,\quad v_4 w_3 = x_4 z_3 + x_2 y_3 z_4,\quad v_4 w_4 = y_4 z_4,&\nonumber\\
&r_3 v_5 = t_3 x_5,\quad r_4 v_5 = t_4 y_5,\quad r_5 w_5 = t_5 z_5.&\nonumber
   \end{eqnarray}
The above system has infinitely many solutions for $u_i$, $v_i$, $w_i$ and $r_i$, $i=1,\ldots,5$. For instance, it can be solved for $u_1$, $u_3$, $u_4$, $v_1$, $v_4$, $w_1$, $w_2$, $w_4$, $r_1$, $r_2$, $r_3$, $r_4$, $r_5$ in terms of $u_5$, $v_2$, $v_3$, $v_5$, $w_3$ and $w_5$. Thus, system \eqref{corr-A} defines a correspondence between $\mathbb{C}^{20}$ and $\mathbb{C}^{20}$. However, we do not write this correspondence explicitly because of its length.

Now, we substitute matrix ${\rm  B}$ to the local tetrahedron equation: 
    \begin{align*}
        &{\rm B}^6_{123}(u_1,\ldots u_5){\rm B}^6_{145}(v_1,\ldots v_5){\rm B}^6_{246}(w_1,\ldots w_5){\rm B}^6_{356}(r_1,\ldots r_5)=\\
        &{\rm B}^6_{356}(t_1,\ldots t_5){\rm B}^6_{246}(z_1,\ldots z_5){\rm B}^6_{145}(y_1,\ldots y_5){\rm B}^6_{123}(x_1,\ldots x_5).
    \end{align*}
It follows that the above matrix tetra-factorisation problem is equivalent to the system of polynomial equations
\begin{eqnarray}
       &u_1 v_1 = x_1 y_1,\quad u_2 w_1 + u_1 v_2 w_3 = x_2 y_1,\quad u_2 w_2 + u_1 v_2 w_4 = y_2,\quad u_3 v_1 = x_3 z_1 + x_1 y_3 z_2,&\nonumber\\
       &u_4 w_1 + u_3 v_2 w_3 = x_4 z_1 + x_2 y_3 z_2,\quad  u_4 w_2 + u_3 v_2 w_4 = y_4 z_2,\quad r_1 u_5 = x_5 z_1,\quad\quad r_2 u_5 = y_5 z_2&\label{corr-B}\\
       &v_3 = x_3 z_3 + x_1 y_3 z_4,\quad v_4 w_3 = x_4 z_3 + x_2 y_3 z_4,\quad v_4 w_4 = y_4 z_4,&\nonumber\\
&r_3 v_5 = x_5 z_3,\quad r_4 v_5 = y_5 z_4,\quad r_5 v_5 = z_5.&\nonumber
   \end{eqnarray}
   The above system has infinitely many solutions for $u_i$, $v_i$, $w_i$ and $r_i$, $i=1,\ldots,5$. For example, it can be solved for $u_1$, $u_3$, $u_4$, $v_1$, $v_4$, $w_1$, $w_2$, $w_4$, $r_1$, $r_2$, $r_3$, $r_4$, $r_5$ in terms of the  rest variables $u_5$, $v_2$, $v_3$, $v_5$, $w_3$ and $w_5$. Thus, system \eqref{corr-B} defines a correspondence between $\mathbb{C}^{20}$ and $\mathbb{C}^{20}$. We omit this correspondence because of its length.

   Next, we substitute matrix ${\rm  C}$ to the local tetrahedron equation: 
    \begin{align*}
        &{\rm C}^6_{123}(u_1,\ldots u_4){\rm C}^6_{145}(v_1,\ldots v_4){\rm C}^6_{246}(w_1,\ldots w_4){\rm C}^6_{356}(r_1,\ldots r_4)=\\
        &{\rm C}^6_{356}(t_1,\ldots t_4){\rm C}^6_{246}(z_1,\ldots z_4){\rm C}^6_{145}(y_1,\ldots y_4){\rm C}^6_{123}(x_1,\ldots x_4).
    \end{align*}
It turns out that the above matrix tetra-factorisation problem is equivalent to the system of polynomial equations
\begin{eqnarray}
       &u_1 w_1 = y_1,\quad u_2 v_1 w_2 = x_1 y_2 z_1,\quad r_1 u_3 + r_4 u_2 v_1 w_3 = y_3 z_1,\quad u_4 w_1 =t_1 y_4,\quad v_2 = x_2 z_2,&\nonumber\\
       & r_2 v_3 = x_3 z_2,\quad r_3 v_3 = z_3,\quad v_4 w_2 = t_2 x_4 + t_3 x_1 y_2 z_4,\quad r_4 v_4 w_3  =t_3 y_3 z_4,\quad  w_4 = t_4 y_4.&\label{corr-C}
   \end{eqnarray}
    The above system has infinitely many solutions for $u_i$, $v_i$, $w_i$ and $r_i$, $i=1,\ldots,4$. For instance, it can be solved for $u_1$, $u_4$, $v_2$, $v_4$, $w_2$, $w_4$, $r_1$, $r_2$, $r_3$, $r_4$, in terms of the  rest variables $u_2$, $u_3$, $v_1$, $v_3$, $w_1$ and $w_3$, namely
    \begin{align*}
        &u_1=\frac{y_1}{w_1},\quad u_4= \frac{t_1 y_4}{w_1},\quad v_2 = x_2 z_2,\quad v_4 = \frac{u_2 v_1(t_2 x_4 + t_3 x_1 y_2 z_4)}{x_1 y_2 z_1},\quad w_2 = \frac{x_1 y_2 z_1}{u_2 v_1},\quad w_4 = t_4 y_4,\\
        &r_1 = \frac{t_2 x_4 y_3 z_1}{u_3(t_2 x_4 + t_3 x_1 y_2 z_4)},\quad r_2 = \frac{x_3 z_2}{v_3},\quad r_3 = \frac{z_3}{v_3},\quad r_4 = \frac{t_3 x_1 y_2 y_3 z_1 z_4}{u_2 v_1 w_3(t_2  x_4 + t_3 x_1 y_2 z_4)}.
    \end{align*}
    Hence, system \eqref{corr-C} defines the above correspondence between $\mathbb{C}^{16}$ and $\mathbb{C}^{16}$. 
    
    Finally, we substitute matrix ${\rm  D}$ to the local tetrahedron equation: 
    \begin{align*}
        &{\rm D}^6_{123}(u_1,u_2, u_3){\rm D}^6_{145}(v_1,v_2, v_3){\rm D}^6_{246}(w_1, w_2, w_3){\rm D}^6_{356}(r_1, r_2,r_3)=\\
        &{\rm D}^6_{356}(t_1, t_2, t_3){\rm D}^6_{246}(z_1,z_2, z_3){\rm D}^6_{145}(y_1, y_2,y_3){\rm D}^6_{123}(x_1,x_2, x_3).
    \end{align*}
The above matrix tetra-factorisation problem is equivalent to the system of polynomial equations
\begin{equation}
       v_1 = x_1 z_1,\quad v_2 w_1 = t_1 x_2,\quad
 r_1 v_3 = t_1 x_3,\quad w_2 = t_2 y_2,\quad r_2 w_3 = t_2 y_3,\quad r_3 w_3 = t_3 z_3.\label{corr-D}
   \end{equation}
    The above system has infinitely many solutions for $u_i$, $v_i$, $w_i$ and $r_i$, $i=1,2,3$. For instance, it can be solved for $u_1$, $u_4$, $v_2$, $v_4$, $w_2$, $w_4$, $r_1$, $r_2$, $r_3$, $r_4$, in terms of the  rest variables $u_2$, $u_3$, $v_1$, $v_3$, $w_1$ and $w_3$, namely
    \begin{equation*}
        v_1 = x_1 z_1,\quad w_1 = \frac{t_1 x_2}{v_2},\quad w_2 = t_2 y_2,\quad r_1 = \frac{t_1 x_3}{v_3},\quad r_2 = \frac{t_2 y_3}{w_3},\quad r_3 = \frac{t_3 z_3}{w_3}.
    \end{equation*}
    That is, system \eqref{corr-D} defines the above correspondence between $\mathbb{C}^{12}$ and $\mathbb{C}^{12}$. 

It can be readily verified that, setting any $x_i=0$ in one of the matrices in \eqref{matrix-forms}, the resulting polynomial equations can be solved for the remaining nonzero $u_i$, $v_i$, $w_i$ and $r_i$.

Furthermore, it can be verified that the reflection of the elements with respect to the antidiagonal in any of the matrices in \eqref{matrix-forms} will leave systems \eqref{corr-A}, \eqref{corr-B}, \eqref{corr-C} and \eqref{corr-D} covariant. Specifically, changing ${\rm A}$, ${\rm B}$, ${\rm C}$ and ${\rm D}$ to
$$ 
{\rm \tilde{A}}={\rm P} {\rm A}^T {\rm P},\quad {\rm \tilde{B}}={\rm P} {\rm B}^T {\rm P},\quad {\rm \tilde{C}}={\rm P} {\rm C}^T {\rm P},\quad
{\rm \tilde{D}}={\rm P} {\rm D}^T {\rm P},\quad \text{where}  ~~ {\rm P}=\begin{pmatrix} 0 & 0 & 1\\ 0 & 1 & 0\\ 1 & 0 & 0\end{pmatrix},
$$ 
and substituting the latter to the local tetrahedron equation, we shall obtain exactly the same correspondences \eqref{corr-A}, \eqref{corr-B}, \eqref{corr-C} and \eqref{corr-D}, respectively, after changing in them $(u,v,x,y)\longleftrightarrow (r,u,t,z)$. In Figure 1 we demonstrate this fact in a particular example.

Finally, after replacing any of the zero elements in \eqref{matrix-forms} with a free variable, one can easily verify that  the local tetrahedron equation has no solutions for $u_i$, $v_i$, $w_i$ and $r_i$.
\end{proof}
\hspace{1cm}
\begin{figure}[ht]
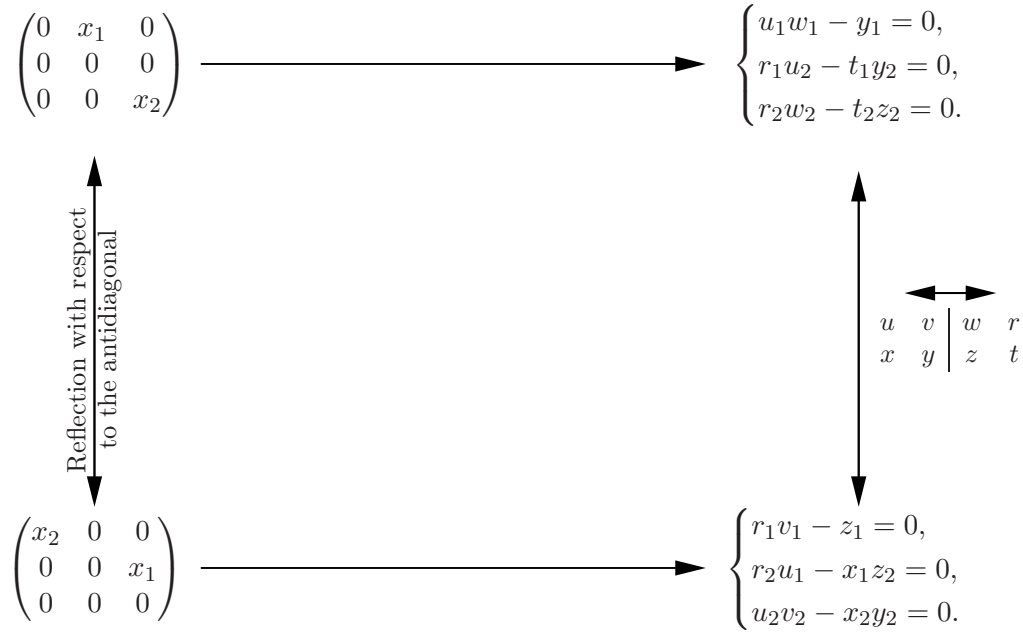

\centering
\centertexdraw{ 
\setunitscale 0.8
\move (-1.3 -.2)  \arrowheadtype t:F \avec(2 -.2)
\textref h:C v:C \htext(-2 -.2){ $\begin{pmatrix}
0 & x_1 & 0 \\ 0 &  0 & 0 \\ 0 & 0 &  x_2
\end{pmatrix}$}
\textref h:C v:C \htext(3 -.2){ $\begin{cases}
u_1w_1-y_1=0,\\
r_1u_2-t_1y_2=0,\\
r_2w_2-t_2z_2=0.
\end{cases}$}
\move (3 -1)  \arrowheadtype t:F \avec(3 -3.1)
\move (3 -3)  \arrowheadtype t:F \avec(3 -.9)
 \move (3.4 -1.7)  \arrowheadtype t:F \avec(3.9 -1.7)
\move (3.8 -1.7) \arrowheadtype t:F \avec(3.3 -1.7)
\textref h:C v:C \htext(3.6 -2){\small
$\begin{array}{c c|c c}
u  & v & w & r\\
x & y & z & t
\end{array}$}
\textref h:C v:C \htext(3 -3.5){$\begin{cases}
r_1v_1-z_1=0,\\
r_2u_1-x_1z_2=0,\\
u_2v_2-x_2y_2=0.
\end{cases}$}
\move (-1.3 -3.5)  \arrowheadtype t:F \avec(2 -3.5)
\textref h:C v:C \htext(-2 -3.5){$\begin{pmatrix}
x_2 & 0 & 0 \\ 0 &  0 & x_1 \\ 0 & 0 & 0
\end{pmatrix}$}
\move (-2 -1)  \arrowheadtype t:F \avec(-2 -3.1)
\move (-2 -3)  \arrowheadtype t:F \avec(-2 -.8)
\textref h:C v:C \small{\vtext(-2.1 -2){\small Reflection with respect}}
\textref h:C v:C \small{\vtext(-1.9 -2){\small  to the antidiagonal}}
}\vspace{1cm}
\caption{Reflection symmetry example.}\label{Mirroring}
\end{figure}

\begin{corollary}
    The maximum dimension of a map generated by substitution of matrix \eqref{matrix-3x3} to the local Zamolodchikov tetrahedron equation is 20.
\end{corollary}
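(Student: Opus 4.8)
The plan is to read the bound straight off the classification in the Theorem, since that result already determines every essentially different solution. A matrix carrying $n$ free variables has $\mathcal{X}=\mathbb{C}^n$, so the generated correspondence acts on $\mathcal{X}^4=\mathbb{C}^{4n}$ and has dimension $4n$; this is exactly what we recorded in the proof of the Theorem, where matrix ${\rm A}$ with its $5$ free variables produced a correspondence between $\mathbb{C}^{20}$ and $\mathbb{C}^{20}$. Thus the corollary reduces to the purely combinatorial claim that no admissible matrix can depend on more than $5$ free variables.

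First I would tabulate the free-variable counts of the four basic forms in \eqref{matrix-forms}: matrices ${\rm A}$ and ${\rm B}$ each depend on $5$ variables, ${\rm C}$ on $4$, and ${\rm D}$ on $3$. Hence among the generators the maximum is $5$, attained by ${\rm A}$ and ${\rm B}$, which by the computation above corresponds to dimension $20$.

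Next I would verify that none of the three admissible operations listed in the Theorem can raise this count. Reflection with respect to the antidiagonal is the conjugation $M\mapsto {\rm P}M^{T}{\rm P}$, which merely permutes matrix entries and therefore preserves the number of free variables; rearranging the $x_i$ among themselves is a relabelling and also preserves the count; and setting some $x_i=0$ can only decrease it. Consequently every matrix of the form \eqref{matrix-3x3} that solves the local tetrahedron equation depends on at most $5$ free variables, so the associated correspondence has dimension at most $4\cdot 5=20$; since ${\rm A}$ and ${\rm B}$ genuinely realise this value, the maximum is exactly $20$.

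I do not expect a genuine obstacle here, as all the substantive work is already carried by the Theorem. The one point deserving explicit mention is that the full matrix \eqref{matrix-3x3} with all nine entries free does \emph{not} solve the local tetrahedron equation — this is precisely the content of the closing remark in the Theorem's proof, that replacing any zero entry by a free variable destroys solvability — so there is no way to exceed the count of $5$, and the bound is complete.
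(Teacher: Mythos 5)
Your argument is correct and follows the same route the paper implicitly takes: the corollary is an immediate consequence of the Theorem, since the largest admissible matrices ${\rm A}$ and ${\rm B}$ carry $5$ free variables, giving correspondences between $\mathbb{C}^{20}$ and $\mathbb{C}^{20}$, while reflection, relabelling and setting entries to zero cannot increase the variable count, and the fully generic matrix \eqref{matrix-3x3} admits no solution. The paper leaves the corollary without an explicit proof, and your write-up simply makes that implicit reasoning precise.
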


\section{Set-theoretical solutions to the 4-simplex equation}\label{4simplexMaps}
In this section, we use the solutions of the local Zamolodchikov tetrahedron equation in order to construct solutions to the  Bazhanov--Stroganov 4-simplex equation.

The first, probably, 4-simplex maps in the literature appear in the work of Hietarinta \cite{Hietarinta}, and also in \cite{Bardakov}. The first 4-simplex maps generated by the local Zamolodchikov tetrahedron equation appeared in \cite{Sokor-2023, Sokor-2024}. In particular, in \cite{Sokor-2023, Sokor-2024} 4-simplex maps were constructed as extensions of well-known tetrahedron maps, using matrices of the form ${\rm A}$ in \eqref{matrix-forms}. Thus, all 4-simplex maps which appear in \cite{Sokor-2023, Sokor-2024} can be generated by system \eqref{corr-A} after a change of variables.

\subsection{Recovery of known 4-simplex maps}
Here, we present some examples of known 4-simplex maps which  can be recovered by the more general solutions to the local tetrahedron equation of the previous section. 

\begin{example} (Kashaev--Sergeev--Korepanov type 4-simplex maps) Changing in \eqref{corr-A} 
\begin{align*}
    (x_1,y_1,z_1,t_1,u_1,v_1,w_1,r_1)&\rightarrow (1,1,1,1,1,1,1,1),\\
    (x_2,y_2,z_2,t_2,u_2,v_2,w_2,r_2)&\rightarrow (x_1,y_1,z_1,t_1,u_1,v_1,w_1,r_1),\\
    (x_3,y_3,z_3,t_3,u_3,v_3,w_3,r_3)&\rightarrow  \left(\frac{k}{x_1},\frac{k}{y_1},\frac{k}{z_1},\frac{k}{t_1},\frac{k}{u_1},\frac{k}{v_1},\frac{k}{w_1},\frac{k}{r_1}\right),\\
    (x_4,y_4,z_4,t_4,u_4,v_4,w_4,r_4)&\rightarrow  (0,0,0,0,0,0,0,0),\\
    (x_5,y_5,z_5,t_5,u_5,v_5,w_5,r_5)&\rightarrow (x_2,y_2,z_2,t_2,u_2,v_2,w_2,r_2),
\end{align*}
and solving the associated equations for $u_i$, $v_i$, $w_i$ and $r_i$, $i=1,2$, we obtain the following correspondence between $\mathbb{C}^{8}$ and $\mathbb{C}^{8}$:
 $$ 
 u_1=\frac{x_1y_1}{y_1+x_1z_1},\quad u_2=x_2,\quad v_1=\frac{x_1z_1}{k},\quad v_2=y_2,\quad w_1=\frac{y_1+x_1z_1}{x_1},\quad r_1=\frac{t_1y_2}{x_2},\quad r_2=\frac{t_2z_2}{w_2}.
 $$

Fixing $w_2=t_2$ or $w_2=y_2$, we recover maps
\begin{align*}
     S_1: (x_1,x_2,y_1,y_2,z_1,z_2,t_1,t_2)&\rightarrow \left(\frac{x_1y_1}{y_1+x_1z_1}, x_2, \frac{x_1z_1}{k}, y_2, \frac{y_1+x_1z_1}{x_1}, t_2, \frac{t_1y_2}{x_2}, z_2\right),\\
   S_2: (x_1,x_2,y_1,y_2,z_1,z_2,t_1,t_2)&\rightarrow \left(\frac{x_1y_1}{y_1+x_1z_1}, x_2, \frac{x_1z_1}{k}, y_2, \frac{y_1+x_1z_1}{x_1}, y_2, \frac{t_1y_2}{x_2}, \frac{t_2z_2}{y_2}\right),
\end{align*}
respectively, from \cite{Sokor-2023}.
 \end{example}

 \begin{example} (Hirota type 4-simplex maps) Changing in \eqref{corr-A} $(x_2,y_2,z_2,t_2,u_2,v_2,w_2,r_2)\rightarrow  (1,1,1,1,1,1,1,1)$, and then
\begin{align*}
    (x_1,y_1,z_1,t_1,u_1,v_1,w_1,r_1)&\rightarrow \left(\frac{x_1}{x_2},\frac{y_1}{y_2},\frac{z_1}{z_2},\frac{t_1}{t_2},\frac{u_1}{u_2},\frac{v_1}{v_2},\frac{w_1}{w_2},\frac{r_1}{r_2}\right),\\
    (x_3,y_3,z_3,t_3,u_3,v_3,w_3,r_3)&\rightarrow  (1,1,1,1,1,1,1,1),\\
    (x_4,y_4,z_4,t_4,u_4,v_4,w_4,r_4)&\rightarrow  (0,0,0,0,0,0,0,0),\\
    (x_5,y_5,z_5,t_5,u_5,v_5,w_5,r_5)&\rightarrow (x_2,y_2,z_2,t_2,u_2,v_2,w_2,r_2),
\end{align*}
and solving the associated equations for $u_i$, $v_i$, $w_i$ and $r_i$, $i=1,2$, we obtain the following correspondence between $\mathbb{C}^{8}$ and $\mathbb{C}^{8}$:
\begin{equation*}
  u_1=\frac{x_1y_1z_2}{x_2z_1+x_1z_2},~~ u_2=y_2,~~ v_1=\frac{x_2z_1+x_1z_2}{z_2},~~ v_2=x_2,~~ w_1=\frac{w_2x_2y_1z_1}{y_2(x_2z_1+x_1z_2)},~~ r_1=\frac{t_1x_2z_2}{w_2y_2},~~ r_2=\frac{t_2z_2}{w_2}.
\end{equation*}

For the choices of the free variable $w_2=t_2$ and $w_2=z_2$ the above correspondence defines the following Hirota type 4-simplex maps:
\begin{align*}
      S_1: (x_1,x_2,y_1,y_2,z_1,z_2,t_1,t_2)&\rightarrow \left(\frac{x_1y_1z_2}{x_2z_1+x_1z_2},y_2,\frac{x_2z_1+x_1z_2}{z_2},x_2,\frac{t_2x_2y_1z_1}{y_2(x_2z_1+x_1z_2)},t_2,\frac{t_1x_2z_2}{t_2y_2},z_2\right),\\
   S_2: (x_1,x_2,y_1,y_2,z_1,z_2,t_1,t_2)&\rightarrow \left(\frac{x_1y_1z_2}{x_2z_1+x_1z_2},y_2,\frac{x_2z_1+x_1z_2}{z_2},x_2,\frac{x_2y_1z_1z_2}{y_2(x_2z_1+x_1z_2)},z_2,\frac{t_1x_2}{y_2},t_2\right),
\end{align*}
which appeared in \cite{Sokor-2024}.
 \end{example} 

The above are illustrative examples, and after certain change of variables, one can recover all the 4-simplex maps which appeared in \cite{Sokor-2023, Sokor-2024}. Moreover, we can construct more general 4-simplex maps which can be restricted to the 4-simplex maps presented in \cite{Sokor-2023, Sokor-2024} at a particular limit. At the same time, as we shall see in the next section, we can construct new hierarchies of 4-simplex maps, other than the ones generated by matrix ${\rm A}$ in \eqref{matrix-forms}, by employing matrices ${\rm B}$, ${\rm C}$ and ${\rm D}$ in \eqref{matrix-forms}.

\subsection{Construction of novel 4-simplex maps}
In section \ref{sol-local-tetra}, we studied the solutions of the local Zamolodchikov tetrahedron equation that are obtained by $3\times 3$ matrices. In this section, we employ the systems of polynomial equations \eqref{corr-A}, \eqref{corr-B}, \eqref{corr-C} and \eqref{corr-D}, for the construction of novel 4-simplex maps. We will adopt the correspondence  approach \cite{Igonin-Sokor}.

We start with correspondence \eqref{corr-A}. In \eqref{corr-A} there are 14 independent equations. Indeed, for the system of equations
$$
P_i(u_1,\ldots u_5,v_1,\ldots v_5,w_1,\ldots w_5,r_1,\ldots r_5,x_1,\ldots x_5,y_1,\ldots y_5,z_1,\ldots z_5,t_1,\ldots t_5)=0,~~ i=1,\ldots,14,
$$
where
\begin{align*}
       &P_1=u_1v_1-x_1y_1,~~ P_2=u_2w_1+u_1v_2w_3-x_2y_1,~~ P_3=u_2w_2+u_1v_2w_4-y_2,~~ P_4=u_3v_1-x_3z_1-x_1y_3z_2,\\
       &P_5=u_4 w_1 + u_3 v_2 w_3 - x_4 z_1 - x_2 y_3 z_2,~~ P_6=u_4 w_2 + u_3 v_2 w_4 - y_4 z_2,~~ P_7=r_1 u_5 - t_1,x_5,\\
       &P_8=r_2 u_5 - t_2 y_5,~~ P_9=v_3 - x_3 z_3 -x_1 y_3 z_4,~~ P_{10}=v_4 w_3 - x_4 z_3 - x_2 y_3 z_4,~~ P_{11}=v_4 w_4 - y_4 z_4,\\
&P_{12}=r_3 v_5 - t_3 x_5,~~ P_{13}= r_4 v_5 - t_4 y_5,P_{14}=r_5 w_5 = t_5 z_5,
   \end{align*}
we have that the $14\times 20$ matrix $\left[Q_1\ldots Q_{14} \right]$, consisting of vector rows $Q_i=\nabla P_i$, $i=1,\dots,14$ has rank 14. That is, in order to define a map, we need 6 ($=20-14$) more equations. 

In what follows, we shall see that for particular supplementary equations, the correspondence \eqref{corr-A} implies a novel 4-simplex map.

\begin{proposition}
    The map $S_A:\mathbb{C}^{20}\rightarrow \mathbb{C}^{20}$, 
    \begin{equation}\label{mapS-A}(x_1,\ldots,x_5,y_1,\ldots,y_5,z_1,\ldots,z_5,t_1,\ldots,t_5)\stackrel{S_A}{\longrightarrow}(u_1,\ldots,u_5,v_1,\ldots,v_5,w_1,\ldots,w_5,r_1,\ldots,r_5),
    \end{equation}
    where
    \begin{align*}
        x_1&\mapsto u_1=x_1y_1,\\
        x_2&\mapsto u_2=\frac{(x_2 x_3 - x_1 x_4) \left[x_4 y_2 z_3 + x_2 (y_2 y_3 - y_1 y_4) z_4\right]}{(x_3 y_2 z_3 + x_1 y_1 y_4 z_4) x_4 z_1 + (x_2 x_3 z_1 z_4 + x_1 x_2 y_3 z_2 z_4 + 
    x_1 x_4 z_2 z_3)  (y_2 y_3 - y_1 y_4)},\\
     x_3&\mapsto u_3 = x_3 z_1 + x_1 y_3 z_2,\\
      x_4&\mapsto u_4=\frac{x_4 y_4 (x_1 x_4 - x_2 x_3)  (z_1 z_4 - z_2 z_3)}{(x_3 y_2 z_3 + x_1 y_1 y_4 z_4) x_4 z_1 + (x_2 x_3 z_1 z_4 + x_1 x_2 y_3 z_2 z_4 + 
    x_1 x_4 z_2 z_3)  (y_2 y_3 - y_1 y_4)},\\
    x_5&\mapsto u_5 = x_5,\\
    y_1&\mapsto v_1 = 1,\\
    y_2&\mapsto v_2 = \frac{x_4 y_2 z_1 + 
  x_2 (y_2 y_3 - y_1 y_4) z_2}{(x_1 x_4 - x_2 x_3) (z_1 z_4 - z_2 z_3)},\\
   y_3&\mapsto v_3 = x_3 z_3 + x_1 y_3 z_4,\\
   y_4&\mapsto v_4 =\frac{(x_3 y_2 z_3 + x_1 y_1 y_4 z_4) x_4 z_1 + (x_2 x_3 z_1 z_4 + x_1 x_2 y_3 z_2 z_4 + 
    x_1 x_4 z_2 z_3)  (y_2 y_3 - y_1 y_4)}{(x_1 x_4 - x_2 x_3) (z_1 z_4 - z_2 z_3)},\\
     y_5&\mapsto v_5 =y_5,\\
     z_1&\mapsto w_1=y_1z_1,\\
z_2&\mapsto w_2 = \frac{ x_1 z_2 ( y_1 y_4 -y_2 y_3) -x_3 y_2 z_1}{x_1 x_4 - x_2 x_3 },\\
z_3&\mapsto w_3 =\frac{(x_1 x_4-x_2 x_3) (x_4 z_3 + x_2 y_3 z_4) (z_1 z_4-z_2 z_3)}{(x_3 y_2 z_3 + x_1 y_1 y_4 z_4) x_4 z_1 + (x_2 x_3 z_1 z_4 + x_1 x_2 y_3 z_2 z_4 + 
    x_1 x_4 z_2 z_3)  (y_2 y_3 - y_1 y_4)},\\
    z_4&\mapsto w_4 =\frac{(x_1 x_4 - x_2 x_3 ) y_4 z_4 (z_1 z_4 -z_2 z_3)}{(x_3 y_2 z_3 + x_1 y_1 y_4 z_4) x_4 z_1 + (x_2 x_3 z_1 z_4 + x_1 x_2 y_3 z_2 z_4 + 
    x_1 x_4 z_2 z_3)  (y_2 y_3 - y_1 y_4)},\\
    z_5&\mapsto w_5 =  z_5,\\
    t_1&\mapsto r_1 = t_1,\\
    t_2&\mapsto r_2 = \frac{t_2y_5}{x_5},\\
    t_3&\mapsto r_3 = \frac{t_3x_5}{y_5},\\
    t_4&\mapsto r_4 = t_4,\\
    t_5&\mapsto r_5 = t_5,
    \end{align*}
    is a noninvolutive 4-simplex map and it admits the following functionally independent invariants:
    \begin{eqnarray}
&&I_1=x_1y_1,\quad I_2=y_1z_1,\quad I_3=x_5,\quad I_4=y_5,\quad I_5=z_5,\\
&&I_6=t_1,\quad I_7=t_5,\quad I_8=t_2t_3,\quad I_9=y_4z_4,\quad I_{10}=x_4y_4.
    \end{eqnarray}
\end{proposition}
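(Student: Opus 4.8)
The plan is to verify the Proposition in two logically independent halves: first that the stated map $S_A$ indeed arises from the correspondence \eqref{corr-A} together with a valid choice of six supplementary equations, and is therefore a genuine $4$-simplex map; and second that the ten listed quantities $I_1,\dots,I_{10}$ are invariants which are functionally independent. For the first half, I would begin by identifying the six supplementary equations that cut the $14$-dimensional solution variety of \eqref{corr-A} down to a single-valued map. From the explicit formulas, the natural choices are $v_1=1$, $u_5=x_5$, $v_5=y_5$, $w_5=z_5$, $r_1=t_1$, and $r_5=t_5$ (equivalently $t_5z_5=t_5z_5$), which are precisely the normalisations visible in the output components. One then checks that, after adjoining these six equations to the $14$ independent polynomial relations $P_1=\cdots=P_{14}=0$, the enlarged system can be solved uniquely for all of $u_i,v_i,w_i,r_i$ as rational functions of the $x_i,y_i,z_i,t_i$, and that this unique solution is exactly the one displayed in \eqref{mapS-A}. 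This is essentially a back-substitution: set $v_1=1$ in $P_1$ to get $u_1=x_1y_1$; read $u_3,v_3$ directly off $P_4,P_9$; then solve the coupled pair of linear-in-$(u_2,u_4,w_1,w_2,v_4,w_3,w_4,v_2)$ subsystems coming from $P_2,P_3,P_5,P_6,P_{10},P_{11}$; and finally extract the $r_i$ from $P_7,P_8,P_{12},P_{13},P_{14}$ using the normalisations $u_5=x_5$, $v_5=y_5$, $w_5=z_5$.

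Having produced $S_A$ as the unique solution of the correspondence, the fact that it satisfies the Bazhanov--Stroganov equation \eqref{4-simplex-eq} is not automatic and must be addressed. The cleanest route is to invoke the Lax-type mechanism described in the Preliminaries: because $S_A$ solves the local tetrahedron equation \eqref{local-tetrahedron} (it is built from a solution of \eqref{corr-A}, which is that refactorisation problem for matrix ${\rm A}$), and because the supplementary equations have been chosen so that the resulting refactorisation is \emph{unique}, the standard argument that a uniquely-factorising Lax representation generates a solution of the $n$-simplex equation applies. In practice, however, the safest and most convincing verification is direct: substitute $S_A$ into both sides of \eqref{4-simplex-eq} over $\mathcal{X}^{10}=\mathbb{C}^{50}$ and confirm the two compositions agree identically as rational maps. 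I expect this direct check to be the main obstacle, since it is a large symbolic computation; I would carry it out in a computer algebra system and report that the two fivefold compositions coincide. Noninvolutivity is then established by exhibiting a single point at which $S_A\circ S_A\neq\id$, or more structurally by noting that $u_1=x_1y_1$ forces $S_A^2$ to send $x_1\mapsto x_1y_1\cdot(\text{image of }y_1)\neq x_1$ in general.

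For the invariants, I would verify each $I_j$ by direct substitution of the output components: for instance $I_1=x_1y_1\mapsto u_1v_1=x_1y_1\cdot 1=x_1y_1$ is immediate from $u_1=x_1y_1$ and $v_1=1$; $I_2=y_1z_1\mapsto v_1w_1=1\cdot y_1z_1$; the singleton invariants $I_3,I_4,I_5,I_6,I_7$ follow from $u_5=x_5$, $v_5=y_5$, $w_5=z_5$, $r_1=t_1$, $r_5=t_5$; the product $I_8=t_2t_3\mapsto r_2r_3=\tfrac{t_2y_5}{x_5}\cdot\tfrac{t_3x_5}{y_5}=t_2t_3$; and $I_9=y_4z_4$, $I_{10}=x_4y_4$ are checked against the formulas for $v_4,w_4$ and $u_4,v_4$ after simplification of the common denominator. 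Functional independence is then confirmed by forming the $10\times 20$ Jacobian $\big[\partial I_j/\partial(x_i,y_i,z_i,t_i)\big]$ and exhibiting a $10\times 10$ minor with nonvanishing determinant; the block structure of the $I_j$ (each involving a disjoint or nearly-disjoint set of the base variables) makes a suitable minor easy to locate, so this step is routine rather than an obstacle.
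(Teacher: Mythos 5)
Your overall architecture (derive the map from \eqref{corr-A} plus six supplementary equations, then verify the 4-simplex property by direct substitution, then check invariance and a rank condition on the Jacobian) is the same as the paper's, and the second half — invariants, functional independence, noninvolutivity — is fine. However, there is a genuine gap in the first half: your six supplementary equations do not determine the map. Two of them, $r_1=t_1$ and $r_5=t_5$, are not new constraints at all — they are consequences of $u_5=x_5$ together with $r_1u_5=t_1x_5$, and of $w_5=z_5$ together with $r_5w_5=t_5z_5$. So your list contributes only four independent conditions ($v_1=1$, $u_5=x_5$, $v_5=y_5$, $w_5=z_5$), and after using them the remaining equations $P_2,P_3,P_5,P_6,P_{10},P_{11}$ form a system of six equations in the \emph{eight} unknowns $u_2,u_4,v_2,v_4,w_1,w_2,w_3,w_4$. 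The "back-substitution" you describe therefore leaves a two-parameter family and cannot single out the specific rational expressions for $u_2,u_4,v_2,v_4,w_1,w_2,w_3,w_4$ displayed in the statement. The paper closes this gap with two genuinely new conditions of determinant type,
\begin{equation*}
u_5(u_1u_4-u_2u_3)=x_5(x_1x_4-x_2x_3),\qquad v_5(v_1v_4-v_2v_3)=y_5(y_1y_4-y_2y_3),
\end{equation*}
which, together with $u_1=x_1y_1$ (equivalent to your $v_1=1$ via $P_1$) and the three normalisations $u_5=x_5$, $v_5=y_5$, $w_5=z_5$, do yield a unique solution. Note that these two conditions cannot be recovered from the refactorisation itself: taking determinants in \eqref{local-tetrahedron} gives only the single relation $\det{\rm L}(u)\det{\rm L}(v)\det{\rm L}(w)\det{\rm L}(r)=\det{\rm L}(t)\det{\rm L}(z)\det{\rm L}(y)\det{\rm L}(x)$, not the componentwise equalities.

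A secondary caution: your suggestion that uniqueness of the refactorisation would let one invoke a "standard Lax argument" to conclude the 4-simplex property is explicitly disclaimed in this paper's introduction (no proof of that implication is known); your fallback of verifying \eqref{4-simplex-eq} by direct substitution is indeed what the paper does, and is the step that actually carries the proof.
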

\begin{proof}
    Map $S_A$ can be obtained as a unique solution for $u_i$, $v_i$, $w_i$ and $r_i$ of the system of equations \eqref{corr-A} supplemented by the following equations:
    \begin{eqnarray*}
        &u_1 = x_1 y_1,\quad u_5=x_5,\quad v_5 = y_5,\quad w_5 = z_5,&\\
        &u_5(u_1 u_4-u_2 u_3) = x_5 (x_1 x_4-x_2 x_3),\quad v_5(v_1 v_4-v_2 v_3) = y_5(y_1 y_4-y_2 y_3).& 
    \end{eqnarray*}

    The fact that $S_A$ is a 4-simplex map can be verified by straightforward substitution of \eqref{mapS-A} to the 4-simplex equation \eqref{4-simplex-eq}. Moreover, we have, for instance, that $r_2\circ S_A=\frac{t_2y_5^2}{x_5^2}\neq t_2$, namely $_AS^2\neq \id$, which means that $S$ is noninvolutive.

    Finally, it can be checked that $I_i\circ S_A=I_i$, $i=1,\ldots,10$, and that the vectors $\nabla I_i$,  $i=1,\ldots,10$, are linearly independent, which means that $I_i$, $i=1,\ldots,10$, are functionally independent invariants of map $S_A$.
\end{proof}

Now, it is easy to show that \eqref{corr-B} has 14 independent equations. In order to define a map, we need 6 supplementary equations. In particular, we have the following.

\begin{proposition}
    The map $S_B:\mathbb{C}^{20}\rightarrow \mathbb{C}^{20}$, 
     \begin{equation}\label{mapS-B}(x_1,\ldots,x_5,y_1,\ldots,y_5,z_1,\ldots,z_5,t_1,\ldots,t_5)\stackrel{S_B}{\longrightarrow}(u_1,\ldots,u_5,v_1,\ldots,v_5,w_1,\ldots,w_5,r_1,\ldots,r_5),
    \end{equation}
    where
    \begin{align*}
        x_1&\mapsto u_1=x_1y_1,\\
        x_2&\mapsto u_2=\frac{(x_2 x_3 - x_1 x_4) \left[x_4 y_2 z_3 + x_2 (y_2 y_3 - y_1 y_4) z_4\right]}{(x_3 y_2 z_3 + x_1 y_1 y_4 z_4) x_4 z_1 + (x_2 x_3 z_1 z_4 + x_1 x_2 y_3 z_2 z_4 + 
    x_1 x_4 z_2 z_3)  (y_2 y_3 - y_1 y_4)},\\
     x_3&\mapsto u_3 = x_3 z_1 + x_1 y_3 z_2,\\
      x_4&\mapsto u_4=\frac{x_4 y_4 (x_1 x_4 - x_2 x_3)  (z_1 z_4 - z_2 z_3)}{(x_3 y_2 z_3 + x_1 y_1 y_4 z_4) x_4 z_1 + (x_2 x_3 z_1 z_4 + x_1 x_2 y_3 z_2 z_4 + 
    x_1 x_4 z_2 z_3)  (y_2 y_3 - y_1 y_4)},\\
    x_5&\mapsto u_5 = x_5,\\
    y_1&\mapsto v_1 = 1,\\
    y_2&\mapsto v_2 = \frac{x_4 y_2 z_1 + 
  x_2 (y_2 y_3 - y_1 y_4) z_2}{(x_1 x_4 - x_2 x_3) (z_1 z_4 - z_2 z_3)},\\
   y_3&\mapsto v_3 = x_3 z_3 + x_1 y_3 z_4,\\
   y_4&\mapsto v_4 =\frac{(x_3 y_2 z_3 + x_1 y_1 y_4 z_4) x_4 z_1 + (x_2 x_3 z_1 z_4 + x_1 x_2 y_3 z_2 z_4 + 
    x_1 x_4 z_2 z_3)  (y_2 y_3 - y_1 y_4)}{(x_1 x_4 - x_2 x_3) (z_1 z_4 - z_2 z_3)},\\
     y_5&\mapsto v_5 =y_5,\\
     z_1&\mapsto w_1=y_1z_1,\\
z_2&\mapsto w_2 = \frac{ x_1 z_2 ( y_1 y_4 -y_2 y_3) -x_3 y_2 z_1}{x_1 x_4 - x_2 x_3 },\\
z_3&\mapsto w_3 =\frac{(x_1 x_4-x_2 x_3) (x_4 z_3 + x_2 y_3 z_4) (z_1 z_4-z_2 z_3)}{(x_3 y_2 z_3 + x_1 y_1 y_4 z_4) x_4 z_1 + (x_2 x_3 z_1 z_4 + x_1 x_2 y_3 z_2 z_4 + 
    x_1 x_4 z_2 z_3)  (y_2 y_3 - y_1 y_4)},\\
    z_4&\mapsto w_4 =\frac{(x_1 x_4 - x_2 x_3 ) y_4 z_4 (z_1 z_4 -z_2 z_3)}{(x_3 y_2 z_3 + x_1 y_1 y_4 z_4) x_4 z_1 + (x_2 x_3 z_1 z_4 + x_1 x_2 y_3 z_2 z_4 + 
    x_1 x_4 z_2 z_3)  (y_2 y_3 - y_1 y_4)},\\
    z_5&\mapsto w_5 =  y_5t_5,\\
    t_1&\mapsto r_1 = z_1,\\
    t_2&\mapsto r_2 = \frac{y_5z_2}{x_5},\\
    t_3&\mapsto r_3 = \frac{x_5z_3}{y_5},\\
    t_4&\mapsto r_4 =z_4,\\
    t_5&\mapsto r_5 = \frac{z_5}{y_5},
    \end{align*}
    is a noninvolutive 4-simplex map and it admits the following functionally independent invariants:
    \begin{eqnarray*}
&&I_1=x_1y_1,\quad I_2=y_1z_1,\quad I_3=x_5,\quad I_4=y_5,\quad I_5=y_4z_4,\quad I_6=x_4y_4\\
&&I_7=t_5y_5z_5,\quad I_8= x_1 x_4-x_2 x_3,\quad I_9= y_1 y_4-y_2 y_3,\quad I_{10}= z_1 z_4-z_2 z_3.
    \end{eqnarray*}
\end{proposition}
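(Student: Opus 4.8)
The plan is to verify Proposition for $S_B$ in three stages that mirror the proof given for $S_A$, exploiting the fact that the underlying polynomial system \eqref{corr-B} differs from \eqref{corr-A} only in the four equations involving $r_1,r_2,r_5$ and in the $w_5$-equation. First I would show that the map $S_B$ in \eqref{mapS-B} is indeed the unique solution of \eqref{corr-B} subject to a set of six supplementary equations analogous to those used for $S_A$. Observe that the $u$- and $v$-components of $S_B$ are literally identical to those of $S_A$, and the $w_1,\ldots,w_4$ components agree as well; only $w_5,r_1,r_2,r_3,r_4,r_5$ differ. Thus the supplementary equations I would impose are
\begin{equation*}
u_1=x_1y_1,\quad u_5=x_5,\quad v_5=y_5,\quad w_5=y_5t_5,\quad
u_5(u_1u_4-u_2u_3)=x_5(x_1x_4-x_2x_3),\quad
v_5(v_1v_4-v_2v_3)=y_5(y_1y_4-y_2y_3),
\end{equation*}
and then solve the combined system. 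The $r$-components follow immediately: from the last line of \eqref{corr-B}, $r_5v_5=z_5$ with $v_5=y_5$ gives $r_5=z_5/y_5$, while $r_1u_5=x_5z_1$ with $u_5=x_5$ gives $r_1=z_1$, and similarly $r_2=y_5z_2/x_5$, $r_3=x_5z_3/y_5$, $r_4=z_4$ from the equations $r_2u_5=y_5z_2$, $r_3v_5=x_5z_3$, $r_4v_5=y_5z_4$. The $u,v,w_1$--$w_4$ block is solved exactly as in the $S_A$ case since those equations coincide, so I would simply reference that computation.

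Second I would establish the 4-simplex property. The honest route, and the one the authors use, is direct substitution of \eqref{mapS-B} into \eqref{4-simplex-eq}; I would carry this out symbolically (e.g.\ in a computer algebra system) by forming the ten maps $S^{ijkl}$ acting on $\mathbb{C}^{40}$ — here each of the ten tensor slots carries the five free variables of matrix ${\rm B}$ — and checking that the two composite maps agree as rational functions. To confirm noninvolutivity it suffices to exhibit a single component where $S_B^2\neq\id$; by analogy with the $S_A$ argument one expects $r_2\circ S_B\neq t_2$, and indeed tracking $t_2$ through two applications (using $r_2=y_5z_2/x_5$ together with the images of $x_5,y_5,z_2$) gives a value differing from $t_2$, which settles it.

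Third I would verify the ten invariants. For each $I_j$ I must check $I_j\circ S_B=I_j$; most are immediate from the explicit formulas, e.g.\ $I_1=x_1y_1\mapsto u_1v_1=x_1y_1\cdot 1$, $I_3=x_5\mapsto u_5=x_5$, $I_4=y_5\mapsto v_5=y_5$, and the three determinant-type quantities $I_8=x_1x_4-x_2x_3$, $I_9=y_1y_4-y_2y_3$, $I_{10}=z_1z_4-z_2z_3$ are preserved precisely because the supplementary equations were chosen to conserve the $2\times 2$ minors; the invariant $I_7=t_5y_5z_5$ should be checked against $r_5w_5v_5$ or the appropriate product using $r_5=z_5/y_5$, $v_5=y_5$, $w_5=y_5t_5$. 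Functional independence then follows by showing the $10\times 20$ Jacobian $[\nabla I_1\cdots\nabla I_{10}]$ has rank $10$, which I would do by selecting ten columns (one variable per invariant, e.g.\ $x_1,z_1,x_5,y_5,z_4,x_4,t_5,x_2,y_2,z_2$) that make the corresponding $10\times 10$ minor triangular-ish and nonzero.

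The main obstacle is the brute-force verification of the 4-simplex equation itself: the composite maps are large rational expressions in twenty (really forty, across the tensor factors) variables, and establishing that the two sides of \eqref{4-simplex-eq} coincide is a heavy symbolic computation rather than a conceptual step. I expect no difficulty in the invariant checks or the uniqueness-of-solution step; the delicate point, if any beyond raw computation, is making sure the six supplementary equations are compatible with \eqref{corr-B} so that the system has a genuinely unique rational solution and not an empty or higher-dimensional one — this is exactly what the determinant conditions $u_5(u_1u_4-u_2u_3)=x_5(x_1x_4-x_2x_3)$ and its $v$-analogue are engineered to guarantee.
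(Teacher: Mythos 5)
Your proposal is correct and follows essentially the same route as the paper: the authors likewise obtain $S_B$ as the unique solution of \eqref{corr-B} supplemented by $u_1=x_1y_1$, $u_5=x_5$, $v_5=y_5$, the two determinant conditions, and a $w_5$-equation, then verify the 4-simplex property by direct substitution, noninvolutivity by exhibiting one non-fixed component, and functional independence via the rank of the gradients. The only (immaterial) difference is that the paper's fourth supplementary equation is $w_5r_5=z_5t_5$, which together with $r_5v_5=z_5$ and $v_5=y_5$ is equivalent to your $w_5=y_5t_5$.
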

\begin{proof}
    Map $S_B$ can be obtained as a unique solution for $u_i$, $v_i$, $w_i$ and $r_i$ of the system of equations \eqref{corr-B} supplemented by the following equations:
    \begin{eqnarray*}
        &u_1 = x_1 y_1,\quad u_5=x_5,\quad v_5 = y_5,\quad w_5r_5 = z_5t_5,&\\
        &u_5(u_1 u_4-u_2 u_3) = x_5 (x_1 x_4-x_2 x_3),\quad v_5(v_1 v_4-v_2 v_3) = y_5(y_1 y_4-y_2 y_3).& 
    \end{eqnarray*}

    It can be verified by straightforward substitution that $S_B$ satisfies the 4-simplex equation \eqref{4-simplex-eq}. Additionally, we have, for instance, that $w_1\circ S_B=y_1z_1\neq y_1$, namely $S_B^2\neq \id$, i.e. $S_B$ is noninvolutive.

    Finally, $I_i\circ S_B=I_i$, $i=1,\ldots,10$, which means that $I_i$, $i=1,\ldots,10$,  are invariants of $S_B$. Also, vectors $\nabla I_i$,  $i=1,\ldots,10$, are linearly independent, that is $I_i$, $i=1,\ldots,10$, are functionally independent.
\end{proof}

Next, consider system \eqref{corr-C}. It is easy to verify that the latter has 10 independent equations. In order to define a map, we need 6 supplementary equations. Specifically, we have the following.

\begin{proposition}
    The map $S_C:\mathbb{C}^{16}\rightarrow \mathbb{C}^{16}$, 
      \begin{equation}\label{mapS-C}(x_1,\ldots,x_4,y_1,\ldots,y_4,z_1,\ldots,z_4,t_1,\ldots,t_4)\stackrel{S_C}{\longrightarrow}(u_1,\ldots,u_4,v_1,\ldots,v_4,w_1,\ldots,w_4,r_1,\ldots,r_4),
    \end{equation}
\begin{align*}
 x_1&\mapsto u_1=\frac{y_1}{t_1},\qqquad\qqquad\quad~~ z_1\mapsto w_1=t_1,\\
        x_2&\mapsto u_2=y_2,\qqquad\qqquad\qquad z_2\mapsto w_2 =1,\\
     x_3&\mapsto u_3 =y_3,\qqquad\qqquad\qquad z_3\mapsto w_3 =t_3y_3,\\
      x_4&\mapsto u_4=y_4,\qqquad\qqquad\qquad z_4\mapsto w_4 =t_4y_4,\\
    y_1&\mapsto v_1 =x_1z_1,\qqquad\qqquad\quad~ t_1\mapsto r_1 = \frac{t_2 x_4 z_1}{t_2 x_4 + t_3 x_1 y_2 z_4},\\
    y_2&\mapsto v_2 =x_2z_2,\qqquad\qqquad\quad~ t_2\mapsto r_2 =z_2,\\\\
   y_3&\mapsto v_3 =x_3,\qqquad\qqquad\qquad~ t_3\mapsto r_3 =\frac{z_3}{x_3},\\
   y_4&\mapsto v_4 =t_2 x_4 + t_3 x_1 y_2 z_4,\qqquad  t_4\mapsto r_4 =\frac{z_4}{t_2 x_4 + t_3 x_1 y_2 z_4},
    \end{align*}
    is a noninvolutive 4-simplex map and it admits the following functionally independent invariants:
    \begin{eqnarray*}
&I_1=x_3y_3,\quad I_2=x_2y_2z_2,\quad I_3=y_3z_3t_3,\quad I_4=x_1y_1z_1,&\\
&I_5=x_3+y_3,\quad I_6=y_1+x_1z_1,\quad I_7=z_3+t_3y_3,\quad I_8=y_2+x_2z_2.&
    \end{eqnarray*}
\end{proposition}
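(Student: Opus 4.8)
The plan is to realise $S_C$ as the unique solution of the tetra-factorisation system \eqref{corr-C} once it is supplemented by six further equations, exactly as was done for $S_A$ and $S_B$. Recall that in the proof of the Theorem the system \eqref{corr-C} was already solved for the ten unknowns $u_1,u_4,v_2,v_4,w_2,w_4,r_1,r_2,r_3,r_4$ in terms of the six free variables $u_2,u_3,v_1,v_3,w_1,w_3$; hence it suffices to pin down the latter six. I would impose
\begin{equation*}
u_2=y_2,\quad u_3=y_3,\quad v_1=x_1z_1,\quad v_3=x_3,\quad w_1=t_1,\quad w_3=t_3y_3,
\end{equation*}
and substitute into the explicit correspondence of the Theorem. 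A direct check shows that this returns precisely the formulas \eqref{mapS-C}: for instance $w_1=t_1$ forces $u_1=y_1/w_1=y_1/t_1$ and $u_4=t_1y_4/w_1=y_4$, while $u_2=y_2$ and $v_1=x_1z_1$ force $w_2=x_1y_2z_1/(u_2v_1)=1$ and $v_4=t_2x_4+t_3x_1y_2z_4$. Since the six supplementary equations determine the free variables as rational functions of the inputs, $S_C$ is a genuine single-valued map $\mathbb{C}^{16}\to\mathbb{C}^{16}$.

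With $S_C$ in hand, the core of the argument is to verify that it satisfies the Bazhanov--Stroganov equation \eqref{4-simplex-eq}. This amounts to substituting \eqref{mapS-C} into both sides and checking the equality of two rational maps of $\mathbb{C}^{40}$; I would carry this out by direct symbolic computation. This is the main obstacle: the two five-fold compositions produce long rational expressions, so the verification is only realistically done with computer algebra. I expect the bookkeeping to collapse substantially, however, since many components of $S_C$ act either as transpositions (e.g. $x_3\leftrightarrow y_3$) or as single monomials.

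The remaining two claims are light. For noninvolutivity it is enough to exhibit one witness: $S_C$ sends the $z_2$-slot to the constant $w_2=1$, hence $S_C^2$ sends it to $1\neq z_2$, so $S_C^2\neq\id$. For the invariants I would verify $I_i\circ S_C=I_i$ one at a time; each is a one-line identity coming from the swap/monomial structure of \eqref{mapS-C}, e.g. $u_3v_3=y_3x_3=I_1$, $u_1v_1w_1=(y_1/t_1)(x_1z_1)t_1=x_1y_1z_1=I_4$, $v_3w_3r_3=x_3(t_3y_3)(z_3/x_3)=I_3$, and the additive ones such as $v_2+u_2w_2=x_2z_2+y_2=I_8$. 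Functional independence then follows from the block structure of the Jacobian: the variables split into the disjoint groups $\{x_1,y_1,z_1\}$, $\{x_2,y_2,z_2\}$ and $\{x_3,y_3,z_3,t_3\}$, carrying respectively the pairs $(I_4,I_6)$ and $(I_2,I_8)$ and the quadruple $(I_1,I_5,I_3,I_7)$, so the $8\times 10$ Jacobian is block diagonal; a short minor computation (the $4\times 4$ block has determinant $(y_3-x_3)\,y_3\,(y_3t_3-z_3)$) shows the blocks have ranks $2$, $2$ and $4$, hence total rank $8$, establishing that the $I_i$ are functionally independent.
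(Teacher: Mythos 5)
Your proposal is correct and follows essentially the same route as the paper: the authors also obtain $S_C$ by supplementing \eqref{corr-C} with six equations (theirs are $u_2v_1=x_1y_2z_1$, $v_3=x_3$, $w_3=t_3y_3$, $u_3=y_3$, $w_1=t_1$, $u_2=y_2$, which are equivalent to your direct assignment of the six free variables), then verify the 4-simplex equation by direct substitution, exhibit a noninvolutivity witness, and check invariance and independence of the $I_i$. Your explicit treatment of the invariants and the block-Jacobian rank computation is a welcome elaboration of what the paper leaves as "it can be checked".
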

\begin{proof}
    Supplementing the system of equations \eqref{corr-C} by the following equations:
    \begin{equation*}
        u_2 v_1 = x_1 y_2 z_1,\quad v_3 = x_3,\quad w_3 = t_3 y_3,\quad u_3 = y_3,\quad w_1 = t_1,\quad u_2 = y_2, 
    \end{equation*}
    we obtain a unique solution for $u_i$, $v_i$, $w_i$ and $r_i$, $i=1,2,3,4$, which defines map \eqref{mapS-C}.

    Substitution that $S_C$ to the 4-simplex equation \eqref{4-simplex-eq} confirms  that map \eqref{mapS-C} is a 4-simplex map. Moreover, we have, for instance, that $w_1\circ S_C=\frac{t_2 x_4 z_1}{t_2 x_4 + t_3 x_1 y_2 z_4}\neq z_1$, namely $S_C^2\neq \id$, i.e. $S_C$ is noninvolutive.

    Finally, $I_i\circ S_C=I_i$, $i=1,\ldots,8$, which means that $I_i$, $i=1,\ldots,8$,  are invariants of $S_C$. Furthermore, vectors $\nabla I_i$,  $i=1,\ldots,8$, are linearly independent, which implies that $I_i$, $i=1,\ldots,8$, are functionally independent.
\end{proof}

Lastly, we consider system \eqref{corr-D}. It can be verified that the \eqref{corr-D} has 6 independent equations. In order to define a map, we need 6 supplementary equations. In particular, we have the following.

\begin{proposition}
    The map $S_D:\mathbb{C}^{12}\rightarrow \mathbb{C}^{12}$, 
    \begin{equation}\label{mapS-D}(x_1,x_2,x_3,y_1,y_2,y_3,z_1,z_2,z_3,t_1,t_2,t_3)\stackrel{S_C}{\longrightarrow}\left(x_1,x_2,x_3,x_1z_1,y_2,y_3,\frac{t_1x_2}{y_2},t_2y_2,z_3,\frac{t_1x_3}{y_3},\frac{t_2y_3}{z_3},t_3\right),
    \end{equation}
    is a noninvolutive 4-simplex map and it admits the following functionally independent invariants:
    $$
    I_1=x_1,\quad I_2=x_2,\quad I_3=x_3,\quad I_4=y_2,\quad I_5=y_3,\quad I_6=z_3.
    $$
\end{proposition}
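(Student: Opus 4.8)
The plan is to reproduce the template of the three preceding propositions. First I would pin down the six supplementary equations needed to turn the correspondence \eqref{corr-D} into a genuine map. Since $u_1,u_2,u_3$ do not occur in \eqref{corr-D} at all, and the six relations there leave three further degrees of freedom among the outputs $v_1,\dots,r_3$, the natural supplementary choice is
$$u_1=x_1,\quad u_2=x_2,\quad u_3=x_3,\quad v_2=y_2,\quad v_3=y_3,\quad w_3=z_3.$$
Feeding $v_2=y_2$, $v_3=y_3$, $w_3=z_3$ back into \eqref{corr-D} makes the system effectively triangular: $v_1=x_1z_1$ and $w_2=t_2y_2$ are read off directly, while the four remaining relations solve uniquely for $w_1=t_1x_2/y_2$, $r_1=t_1x_3/y_3$, $r_2=t_2y_3/z_3$ and $r_3=t_3$. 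This recovers \eqref{mapS-D}, and uniqueness is immediate since, after the three substitutions above, each relation in \eqref{corr-D} determines exactly one remaining output.

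The main --- and essentially only --- obstacle is verifying that $S_D$ solves the 4-simplex equation \eqref{4-simplex-eq}. I would check this by direct substitution, expanding both five-fold compositions in \eqref{4-simplex-eq} as rational maps on $(\mathbb{C}^3)^{10}$ and comparing them slot-by-slot. Because every component of $S_D$ is a monomial or a quotient of monomials, the comparison collapses to a finite list of monomial identities; the check is mechanical but lengthy, so I would delegate it to a computer algebra system, exactly as the earlier propositions do.

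For the remaining assertions the work is light. To see that $S_D$ is noninvolutive I would track one coordinate through $S_D^2$: applying $S_D$ once sends the slots carrying $(t_1,x_3,y_3)$ to $(t_1x_3/y_3,\,x_3,\,y_3)$, so a second application yields $r_1\circ S_D^2=t_1x_3^2/y_3^2\neq t_1$, whence $S_D^2\neq\id$. Finally, the six functions $I_1=x_1$, $I_2=x_2$, $I_3=x_3$, $I_4=y_2$, $I_5=y_3$, $I_6=z_3$ are invariant by inspection, since $S_D$ fixes precisely these six input coordinates (namely $u_1=x_1$, $u_2=x_2$, $u_3=x_3$, $v_2=y_2$, $v_3=y_3$, $w_3=z_3$), so $I_i\circ S_D=I_i$; their gradients are six distinct standard basis vectors of $\mathbb{C}^{12}$ and hence linearly independent, which gives functional independence and completes the argument.
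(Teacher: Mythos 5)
Your proposal is correct and follows essentially the same route as the paper: the same six supplementary equations $u_1=x_1$, $u_2=x_2$, $u_3=x_3$, $v_2=y_2$, $v_3=y_3$, $w_3=z_3$ reduce the correspondence \eqref{corr-D} to the unique map \eqref{mapS-D}, the 4-simplex property is checked by direct substitution, noninvolutivity follows from $r_1\circ S_D = t_1x_3^2/y_3^2\neq t_1$, and the invariants are the six fixed coordinates. No gaps.
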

\begin{proof}
    Supplementing system \eqref{corr-D} by the following equations:
    \begin{equation*}
        u_1 = x_1,\quad u_2 = x_2,\quad u_3 = x_3,\quad v_2 = y_2,\quad v_3 = y_3,\quad w_3 = z_3,
    \end{equation*}
    we obtain a unique solution for $u_i$, $v_i$, $w_i$ and $r_i$, $i=1,2,3$, which defines map \eqref{mapS-D}.

    After substitution of $S_D$ to the 4-simplex equation \eqref{4-simplex-eq} we obtain  that map \eqref{mapS-D} is a 4-simplex map. Additionally, we have, for instance, that $r_1\circ S_D=\frac{t_1 x_3^2}{y_3^2}\neq t_1$, namely $S_D^2\neq \id$, thus $S_D$ is noninvolutive.

    The invariants of  the map are obvious and so is  their functional independence .
\end{proof}

\section{Conclusions}
In this paper we studied the set-theoretical solutions to the local Zamolodchikov tetrahedron equation that generate set-theoretical solutions to the Bazhanov--Stroganov 4-simplex equation. In particular, we studied the solutions of the local tetrahedron equation in the form of correspondences between $\mathbb{C}^n$ and $\mathbb{C}^n$, generated by $3\times 3$ matrices. Although, usually, in the literature $n$-simplex maps appear as unique solutions to the $(n-1)$-simplex equation, we adopt the correspondence approach \cite{Igonin-Sokor} which gives rise to much wider hierarchies of solutions.

Specifically, we found that all the essentially different solutions to the Zamolodchikov tetrahedron equation generated by $3\times 3$ matrices are  given by matrices ${\rm A, B, C}$ and ${\rm D}$ in  \eqref{matrix-forms}. In fact, these matrices generate all the possible solutions to the local 4-simplex equation in the form of correspondences \eqref{corr-A}, \eqref{corr-B}, \eqref{corr-C} and \eqref{corr-D}. We demonstrated that \eqref{corr-A} generates all known 4-simplex maps, using Kashaev--Korepanov--Sergeev and Hirota type 4-simplex maps as illustrative examples. Then, we employed  the correspondences \eqref{corr-A}, \eqref{corr-B}, \eqref{corr-C} and \eqref{corr-D} to construct new 4-simplex maps, namely maps \eqref{mapS-A}, \eqref{mapS-B}, \eqref{mapS-C} and \eqref{mapS-D}. All these maps are noninvolutive which, in terms of dynamics, are more interesting that involutive maps which have trivial dynamics.

Our results can be extended in the following ways:
\begin{itemize}
    \item Study the Liouville integrability of the 4-simplex maps presented in this paper.

    We demonstrated that all the 4-simplex maps constructed in this paper admit enough functionally independent first  integrals; that is, half the dimension of the map in number. This  already  indicates their integrability, however their complete (Liouville) integrability is  an open problem.
    
    \item Employ the correspondences \eqref{corr-A}, \eqref{corr-B}, \eqref{corr-C} and \eqref{corr-D} to construct parametric 4-simplex maps.

    We focused on non-parametric 4-simplex maps in this paper, however, supplementing systems \eqref{corr-A}, \eqref{corr-B}, \eqref{corr-C} and \eqref{corr-D} by equations containing parameters, one may construct new solutions to the parametric 4-simplex equation. Several parametric 4-simplex maps were presented in \cite{Sokor-2023}.
    
    \item Classify the solutions of the local tetrahedron equation.

    Our motivation to study the solutions of the local Zamolodchikov tetrahedron equation is that such  solutions generate 4-simplex maps. This, together with the fact that correspondences satisfying the local tetrahedron equation may give generate infinitely many solutions (and also give rise to solutions that  do not belong to the known classification lists \cite{Igonin-Sokor}), suggests that it makes sense to classify the solutions of the local tetrahedron equation. The classification of  the latter solutions is an open problem.

    In general, we propose to classify the generators of $n$-simplex maps, namely the solutions to the local $n$-simplex equation. This will give rise to wider from the already known classification lists even in the case of Yang--Baxter (2-simplex) maps.
    
    \item Study the solutions of the local pentagon equation.

    The $n$-gon equations have equally rich algebro-geometric structure as the $n$-simplex equations \cite{Dimakis-Hoissen, Doliwa-Sergeev, Hoissen}. The most famous member of the family of $n$-gon equations is the $5$-gon or pentagon equation (see, e.g. \cite{Dimakis-Hoissen, Doliwa-Sergeev, Kassotakis-2023, Hoissen} and the references therein). Similar results to those obtained in this paper can be obtained for the local pentagon equation. Again, the motivation comes from the fact that the local $(n-1)$-gon equation is a generator of solutions to the $n$-gon equation.
    
    \item Extend the  results to the noncommutative case.

    Noncommutative $n$-simplex maps have been in the centre of interest for scientists from the mathematical physics community (see, e.g. \cite{Giota-Miky, Doliwa-Kashaev, Kassotakis-Kouloukas, Sokor-2022} and the references therein), due to the rapid development of noncommutative integrable systems. One could extend our  results by considering that the elements of matrix \eqref{matrix-3x3} are matrices, or, more generally,  free variables from a noncommutative division ring.
\end{itemize}

\section*{Acknowledgements}
This work was funded by the Russian Science Foundation project No. 20-71-10110 (https://rscf.ru/en/project/23-71-50012/).
Part of this  work,  namely the proof of Theorem 3.1, was carried out within the framework of a development programme for the Regional Scientific and Educational Mathematical Centre of the P.G. Demidov Yaroslavl State University with financial support from the Ministry of Science and Higher Education of the Russian Federation (Agreement on provision of subsidy from the federal budget No. 075-02-2023-948).

We would like to thank S. Igonin for several useful discussions.

\end{document}